\documentclass[runningheads]{llncs}

\usepackage[T1]{fontenc}

\usepackage{amsmath,amssymb,amsfonts}
\usepackage{graphicx}

\usepackage{xcolor}
\usepackage{fontawesome}
\usepackage{booktabs}
\usepackage[normalem]{ulem}

\usepackage{comment}
\usepackage{url}

\usepackage{tabularx}

\usepackage{bbding}

\newcommand{\ie}{\textit{i}.\textit{e}., }
\newcommand{\eg}{\textit{e}.\textit{g}., }

\def\BibTeX{{\rm B\kern-.05em{\sc i\kern-.025em b}\kern-.08em
    T\kern-.1667em\lower.7ex\hbox{E}\kern-.125emX}}

\usepackage{orcidlink}

\begin{document}

\title{Hard and Soft Real-Time DNN Inference Scheduling on Heterogeneous Multi-Accelerator Systems via Reinforcement Learning}
\titlerunning{Hard and Soft Real-Time DNN Inference Scheduling}

\author{
Francesco~Giulio~Blanco\,\orcidlink{0009-0003-6700-8111}%
\thanks{Corresponding author.\\This paper has been accepted for presentation at the 34th International Conference on Real-Time Networks and Systems (RTNS 2026).} \and\\
Enrico~Russo\,\orcidlink{0000-0002-7598-146X} \and
Giuseppe~Ascia\,\orcidlink{0000-0001-7452-5828} \and
Davide~Patti\,\orcidlink{0000-0003-0874-7793} \and
Maurizio~Palesi\,\orcidlink{0000-0003-3129-0664}
}

\authorrunning{F. G. Blanco et al.}

\institute{
University of Catania, Catania, Italy\\
\email{francesco.blanco@phd.unict.it},
\email{\{name.surname\}@unict.it}
}

\maketitle

\begin{abstract}
We present a scheduling framework for heterogeneous multi-accelerator systems executing both hard real-time and soft real-time QoS-aware DNN inference workloads. The framework combines an online reinforcement learning (RL) scheduler with a deterministic runtime safety layer. The RL scheduler dynamically generates candidate schedules by jointly selecting sub-job priorities and accelerator assignments, while runtime certification guarantees that all HRT deadlines remain satisfied independently of the learned policy.

To enable runtime certification in heterogeneous precedence-constrained systems, we introduce a schedulability-preserving prefix replacement criterion based on machine-wise cuts. The proposed criterion guarantees that runtime-generated schedule prefixes preserve the existence of a feasible continuation without requiring complete finish-time dominance with respect to a feasible reference schedule.

Experimental results on heterogeneous DNN accelerator platforms show that the proposed framework preserves HRT schedulability while improving QoS-aware deadline satisfaction, client-level service quality, and accelerator utilization compared to state-of-the-art scheduling approaches.

\keywords{Real-time \and Scheduling \and DNN \and Reinforcement learning \and QoS \and Heterogeneous \and Multi-tenant \and Multi-accelerator \and Multi-processor}
\end{abstract}

\section{Introduction}

Deep Neural Network (DNN) inference is increasingly used in safety-critical cyber-physical systems such as autonomous vehicles, robotics, industrial automation, and edge AI infrastructures \cite{aromolo2025real,restuccia2021time}. These systems often execute mixed workloads on heterogeneous multi-accelerator platforms, where hard real-time (HRT) inference tasks coexist with soft real-time (SRT) requests generated by multiple clients \cite{vestal2007preemptive,yu2021automated,ling2021rt}.

Scheduling such workloads is challenging because DNN inference induces precedence-constrained execution graphs, accelerator heterogeneity couples allocation and priority decisions, and SRT requests arrive dynamically at runtime \cite{romero_usenix21,li2021throughput}. At the same time, HRT tasks must always satisfy their deadlines under worst-case execution assumptions.

Existing real-time scheduling approaches can provide timing guarantees, but are often based on static schedules that poorly adapt to runtime variability \cite{kwon2021heterogeneous,das2024multi}. Conversely, adaptive online schedulers and multi-tenant inference systems improve utilization and responsiveness, but generally lack strict HRT guarantees \cite{choi2020prema,RELMAS1}. Reinforcement Learning (RL) is attractive for online scheduling because it can learn workload-dependent policies without handcrafted heuristics \cite{RELMAS2}, and this is the reason why it is employed in this work, but learned policies alone cannot be trusted for safety-critical scheduling.

This paper presents a scheduling framework for heterogeneous multi-accelerator systems that combines adaptive online scheduling with strict HRT guarantees. The framework separates optimization from safety. An RL-based online scheduler dynamically generates candidate schedule prefixes\footnote{Before providing the formal definition of a schedule prefix in Definition~\ref{def:scheduleprefix}, we informally clarify the terminology. Given a vector of local times, one for each machine in the multi-accelerator system, the corresponding schedule prefix is the portion of the schedule preceding such time on each machine.} by jointly selecting accelerator assignments and execution priorities, while a deterministic runtime safety layer validates each candidate before execution.

To enable runtime certification in heterogeneous precedence-constrained systems, we introduce a schedulability-preserving prefix replacement criterion based on machine-wise cuts, inspired by transfer schedulability \cite{willemsen2025transfer}. The proposed criterion guarantees that runtime-generated schedule prefixes preserve the existence of a feasible continuation without requiring complete finish-time dominance with respect to a feasible reference schedule.

Experimental results on heterogeneous DNN accelerator platforms show that the proposed framework preserves HRT schedulability while improving SRT deadline satisfaction, client-level QoS, and accelerator utilization compared to state-of-the-art scheduling approaches.

The main contributions of this paper are:
\begin{itemize}
\item A formulation of mixed hard/soft real-time DNN inference scheduling on heterogeneous multi-accelerator systems using a DAG-based task model.
\item A schedulability-preserving runtime certification mechanism for heterogeneous precedence-constrained schedules.
\item An RL-based online scheduler for joint accelerator allocation and execution priority assignment within a deterministic HRT safety envelope. 
\\Our open-source implementation is available at: \\\url{https://github.com/BlancoFrancescoGiulio/retnas}. 
\item Formal proofs showing that runtime certification preserves HRT schedulability independently of the employed policy (\textit{e.g.}, RL).
\item An experimental evaluation demonstrating improved SRT QoS and accelerator utilization while preserving HRT correctness.
\end{itemize}

\section{System and Scheduling Model}
\label{sec:model}

\begin{table}[t]
\centering
\caption{Summary of notation}
\label{tab:notation}
\renewcommand{\arraystretch}{1.05}
\footnotesize

\begin{minipage}[t]{0.485\columnwidth}
\vspace{0pt}
\begin{tabularx}{\linewidth}{@{}lX@{}}
\toprule
\textbf{Symbol} & \textbf{Description} \\
\midrule
$\mathcal{M}$, $M$ & Set and number of processors/sub-accelerators \\
$\mathcal{M}_v$ & Processors eligible for sub-job $v$ \\
$C_{v,m}$ & WCET of sub-job $v$ on processor $m$ \\
$s_v$, $f_v$ & Start and finish time of sub-job $v$ \\
$f_J$ & Completion time of DAG job $J$ \\
\midrule
$G_J=(V_J,E_J)$ & DAG representation of job $J$ \\
$V_J$, $E_J$ & Sub-jobs and precedence constraints of $J$ \\
\midrule
$\mathcal{T}^{\mathrm H}$ & Set of HRT tasks \\
$\tau_i=(G_i,D_i,T_i)$ & HRT task $i$ \\
$J_{i,k}^{\mathrm H}$ & $k$-th HRT DAG job of $\tau_i$ \\
$a_{i,k}$, $d_{i,k}^{\mathrm H}$ & Release time and deadline of $J_{i,k}^{\mathrm H}$ \\
\bottomrule
\end{tabularx}
\end{minipage}
\hfill
\begin{minipage}[t]{0.485\columnwidth}
\vspace{0pt}
\begin{tabularx}{\linewidth}{@{}lX@{}}
\toprule
\textbf{Symbol} & \textbf{Description} \\
\midrule
$J_j^{\mathrm S}$ & SRT DAG job $j$ \\
$a_j$, $d_j^{\mathrm S}$ & Arrival time and soft deadline of $J_j^{\mathrm S}$ \\
$\kappa_j$ & Client generating $J_j^{\mathrm S}$ \\
$\mathcal K$ & Set of clients \\
$H^{\mathrm S}$ & System-level SRT deadline-hit ratio \\
$H_k^{\mathrm S}$ & Client-level SLO achievement rate \\
$\hat H_k^{\mathrm S}$ & Target SLO achievement rate of client $k$ \\
$R_{J_j^{\mathrm S}}$ & Response time of SRT job $J_j^{\mathrm S}$ \\
\midrule
$S^{\mathrm{ref}}$ & Offline feasible WCET reference schedule \\
$S^{\mathrm{cand}}$ & Online candidate schedule\\
$T_S$ & Runtime scheduler activation period \\
$\tau_{\mathrm{sch}}$ & Upper bound on runtime scheduling overhead \\
\bottomrule
\end{tabularx}
\end{minipage}

\end{table}

We consider the problem of scheduling periodic hard real-time (HRT) tasks
together with aperiodic soft real-time (SRT) jobs on a heterogeneous processing
platform.
The model follows standard
real-time scheduling terminology and DAG-based task models
\cite{buttazzo1997hard,baruah2015multiprocessor,davis2011survey,tang2011combining,dag}.

\subsection{Platform and DAG Job Model}
\label{subsec:platform-dag-model}

The platform consists of a finite set of machines (processors or sub-accelerators),
$\mathcal{M}=\{1,\ldots,M\}$. The platform is heterogeneous: the execution time
of a computation depends on both the computation and the processor on which it
executes. Thus, processors are modeled as unrelated machines.

Computations are represented as DAG jobs. A DAG job $J$ is described by
$G_J=(V_J,E_J)$, where each vertex $u\in V_J$ is a sequential computation,
called a sub-job, and each edge $(u,v)\in E_J$ is a precedence constraint.
A sub-job $u$ is called a \emph{terminal} sub-job
if it has no successors, i.e.,
$\nexists v \in V_J : (u,v)\in E_J$.
If
sub-job $u$ starts at time $s_u$ and finishes at time $f_u$, then
$(u,v)\in E_J$ implies $s_v\ge f_u$. The completion time of job $J$ is
$f_J=\max_{u\in V_J} f_u$.
For each sub-job $u$, let $\mathcal{M}_u\subseteq\mathcal{M}$ denote the set of
processors on which $u$ is eligible to execute. For each eligible processor
$m\in\mathcal{M}_u$, $C_{u,m}$ denotes the worst-case execution time (WCET) of
$u$ on $m$; for $m\notin\mathcal{M}_u$, we set $C_{u,m}=\infty$. If $u$ is
assigned to processor $m_u$, then under worst-case execution $f_u=s_u+C_{u,m_u}$.

\paragraph{Communication and data-movement assumptions.}
Prior real-time DNN scheduling approaches explicitly model fine-grained
data-movement effects, including cross-processor communication, CPU/GPU
data transfers, and layer-level pipelining or migration
\cite{ling2022blastnet,kang2021lalarand,xiang2019pipelined}.
In this work, we incorporate data-movement
costs into the profiled execution-time bounds $C_{u,m}$. In particular,
for HRT sub-jobs, $C_{u,m}$ conservatively accounts for the activation
write-back and read costs required for the execution of sub-job $u$ on
accelerator $m$.

The runtime scheduler, instead, can exploit locality when two layers with a data dependency or two instances of the same layer execute sequentially
on the same accelerator, by reusing locally available activations or
weights. However, the HRT reference schedule construction and the runtime certification described in Sections~\ref{subsec:reference_schedule} and \ref{subsec:acceptance_problem} do not rely on such reuse. Indeed, after a runtime
prefix replacement, another HRT or SRT sub-job may execute between two
consecutive layers, or the successor may be assigned to a different
accelerator, invalidating assumptions on local data availability.

\subsection{Hard and Soft Real-Time Workloads}
\label{subsec:workloads}

The HRT workload is a finite set of periodic DAG tasks
$\mathcal{T}^{\mathrm H}=\{\tau_1,\ldots,\tau_n\}$. Each task is described by
\mbox{$\tau_i=(G_i,D_i,T_i)$}, where $G_i=(V_i,E_i)$ is its DAG, $D_i$ is its relative
deadline, and $T_i$ is its period. We consider constrained-deadline tasks, i.e.,
$D_i\le T_i$.

The $k$-th release of task $\tau_i$ generates an HRT DAG job~$J_{i,k}^{\mathrm H}$ at time $a_{i,k}=a_{i,0}+kT_i$, with absolute deadline
\mbox{$d_{i,k}^{\mathrm H}=a_{i,k}+D_i$}. An HRT job meets its deadline iff
\mbox{$f_{J_{i,k}^{\mathrm H}}\le d_{i,k}^{\mathrm H}$}. The HRT requirement is that
every generated job of every admitted HRT task meets its deadline.

The SRT workload consists of aperiodic DAG jobs arriving online. Each SRT job is
modeled as \mbox{$J_j^{\mathrm S}=(G_j,a_j,d_j^{\mathrm S},\kappa_j)$}, where
\mbox{$G_j=(V_j,E_j)$} is its DAG, $a_j$ is its arrival time, $d_j^{\mathrm S}$ is its
soft absolute deadline, and $\kappa_j\in\mathcal K$ identifies the client that
generated the request. Unlike HRT tasks, SRT jobs are not known during admission
control and no minimum inter-arrival time is assumed.

For SRT jobs, deadline misses do not compromise correctness, but degrade
QoS. If $f_{J_j^{\mathrm S}}>d_j^{\mathrm S}$, the request
misses its soft deadline. SRT jobs may therefore be delayed, canceled, or
rejected whenever necessary to preserve the HRT guarantee. For a set
$\mathcal{J}^{\mathrm S}$ of requested SRT jobs, the system-level soft-deadline
hit ratio is
\[
H^{\mathrm S}
=
\frac{
|\{J_j^{\mathrm S}\in\mathcal{J}^{\mathrm S}
\mid f_{J_j^{\mathrm S}}\le d_j^{\mathrm S}\}|
}{
|\mathcal{J}^{\mathrm S}|
}.
\]
The response time of an SRT job is
$R_{J_j^{\mathrm S}}=f_{J_j^{\mathrm S}}-a_j$.

\subsection{Client-Level QoS and Fairness}
\label{subsec:client-qos}

The system-level hit ratio $H^{\mathrm S}$ is an average system-level QoS metric and may hide
severe client-level QoS degradation; as observed in real-time systems, averages offer no guarantees~\cite{art:buttazzo_1999qos}. For example,
one client may experience a high deadline-hit ratio while another receives very
poor service, even though $H^{\mathrm S}$ remains acceptable.

We therefore define client-level Service-Level Objective (SLO) achievement rates, which represent client-level QoS metrics. Let
$\mathcal{J}_k^{\mathrm S}=\{J_j^{\mathrm S}\in\mathcal{J}^{\mathrm S}\mid
\kappa_j=k\}$ be the set of SRT jobs requested by client
$k\in\mathcal K$. The SLO achievement rate of client $k$ is
\[
H_k^{\mathrm S}
=
\frac{
|\{J_j^{\mathrm S}\in\mathcal{J}_k^{\mathrm S}
\mid f_{J_j^{\mathrm S}}\le d_j^{\mathrm S}\}|
}{
|\mathcal{J}_k^{\mathrm S}|
}.
\]
Each client may specify a target SLO achievement rate $\hat H_k^{\mathrm S}$.
The scheduler should therefore improve not only the system-level SRT hit ratio, but
also the client-level satisfaction with respect to these targets. A compact client-oriented QoS objective is to maximize
\[
\min_{k\in\mathcal K}
\left(H_k^{\mathrm S}-\hat H_k^{\mathrm S}\right),
\]
subject to preserving all HRT deadlines. In cloud-oriented deployments, a client
may correspond to a service tenant; in edge and cyber-physical deployments, a
client may correspond to, \eg a robot, device, sensor stream, or independent
inference pipeline.

\subsection{Scheduling Model and Objective}
\label{subsec:scheduling-model}

As in standard multiprocessor scheduling, scheduling consists of two coupled
problems~\cite{davis2011survey}: an allocation problem, which decides on which
processor a computation executes, and a priority problem, which decides when and
in what order ready computations execute. In system-architecture terminology,
this corresponds to space-time scheduling: the spatial decision is processor
assignment, while the temporal decision is execution ordering.

The scheduler operates at sub-job granularity and uses dynamic priorities to decide the execution ordering. The
priority of a ready sub-job may change over time as a function of
deadlines, laxities, processor availability, queue state, graph progress, and
online scheduling decisions. HRT sub-jobs execute non-preemptively. If an
executing SRT sub-job must be interrupted to preserve HRT correctness, its
partial execution is discarded and any later execution is treated as a new
dispatch. We model this event as cancellation rather than preemption with
resumption, which reflects accelerator architectures where saving and restoring
the internal execution state may be unsupported or prohibitively expensive.

A schedule assigns each ready sub-job $u$ to a processor $m_u$ and to an
execution interval $[s_u,f_u)$. A schedule is valid if it satisfies:
(i) eligibility, $m_u\in\mathcal M_u$ for every executed \mbox{sub-job} $u$;
(ii) processor exclusivity, no processor executes more than one sub-job at any
time; (iii) nonparallel execution, no sub-job executes on more than one processor
at any time; and (iv) precedence preservation, $(u,v)\in E_J$ implies
$s_v\ge f_u$.

The scheduling objective is lexicographic. First, all admitted HRT jobs must
meet their deadlines under all admissible runtime executions. Second, subject to
this hard constraint, the scheduler should improve SRT responsiveness,
soft-deadline hit ratios, client-level SLO satisfaction, and accelerator
utilization by reducing idle intervals across heterogeneous
accelerator resources.

\section{Framework Overview}
\label{sec:framework}

The proposed framework combines adaptive online scheduling with strict
hard real-time (HRT) guarantees for heterogeneous multi-accelerator
systems executing both HRT and soft real-time (SRT) DNN inference
workloads.

The key idea is to separate \emph{safety} from \emph{performance
optimization}. Runtime performance is optimized by an online scheduler,
while HRT correctness is enforced independently through a deterministic
runtime certification mechanism.

Figure~\ref{fig:overview} illustrates the overall framework.
Offline, the system constructs a feasible HRT-only reference schedule
under worst-case execution assumptions. At runtime, an online scheduling
policy observes the current system state and generates candidate
schedule prefixes that jointly schedule HRT and SRT sub-jobs. Before
execution, each candidate prefix is validated by a runtime safety check.
If the prefix is certified safe, it is executed; otherwise, execution
falls back to the corresponding portion of the certified HRT reference
schedule.
As a consequence, HRT correctness does not depend on the online
scheduler itself. The scheduler may improve SRT responsiveness,
fairness, and accelerator utilization, but unsafe runtime decisions are
filtered by the safety layer before execution.

Although this work employs a reinforcement learning (RL) scheduler, the
framework itself is policy-agnostic. Any online scheduling policy 
capable of generating runtime candidate schedules can be enclosed within
the same safety envelope.

\begin{figure}
    \centering
    \includegraphics[width=0.9\linewidth]{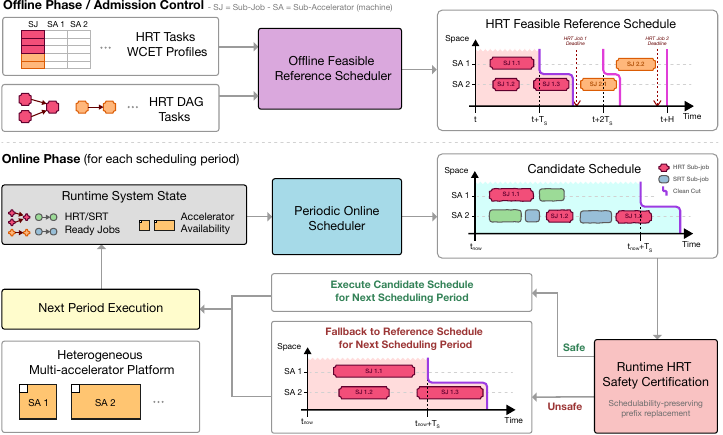}
    \caption{Framework overview.}
    \label{fig:overview}
\end{figure}

\subsection{Reference Schedule and Admission Control}
\label{subsec:reference_schedule}

Hard real-time correctness is derived from a feasible offline reference
schedule constructed considering only the admitted HRT workload. During
system initialization, the framework computes a WCET-based HRT-only
schedule
$S^{\mathrm{ref}}$
over a finite analysis horizon. For periodic tasks, the natural horizon
is the hyperperiod
$
H=\operatorname{lcm}(T_1,\ldots,T_n).
$

For each HRT sub-job $u$, the reference schedule records its processor
assignment and execution interval
$
(s_u^{\mathrm{ref}},f_u^{\mathrm{ref}},m_u^{\mathrm{ref}}).
$
\\The reference schedule is feasible if all HRT jobs meet their deadlines:
$
f_{J_{i,k}^{\mathrm H}}^{\mathrm{ref}}
\le
d_{i,k}^{\mathrm H}$

$\forall J_{i,k}^{\mathrm H}.
$ At runtime, additional HRT tasks may request admission into the system.
When a new HRT task is submitted, the framework performs admission
control by attempting to construct a new feasible HRT-only reference
schedule including both the already admitted HRT tasks and the new
candidate task. If such a feasible schedule exists, the task is admitted
and the newly generated schedule replaces the previous reference
schedule. Otherwise, the task is rejected. A reference schedule is constructed offline, hence, its construction complexity does not affect runtime scheduling, and the framework is independent of the construction method.
SRT clients are not admitted through the HRT feasibility test; a SRT client only
registers its target SLO achievement rate.
To improve the availability of contiguous slack regions, the reference
schedule may optionally be transformed through a uniform forward shift
of HRT executions while preserving all deadlines, so that at least one sub-job starts at its \textit{latest start time} (LST).

\subsection{Runtime Scheduling}
\label{subsec:scheduler_activation}

The online scheduler is activated periodically with period $T_S$. At
activation time
$
t_k = kT_S,
$
the scheduler observes the current runtime state, including ready HRT
and SRT sub-jobs, processor availability, queue occupancy, graph
progress, and remaining execution tails inherited from previous
scheduler windows.

In
heterogeneous DNN inference systems, allocation and priority decisions~\cite{davis2011survey} are tightly
coupled because execution times depend on the selected accelerator.

Based on the observed runtime state, the scheduler generates a candidate
schedule prefix for a bounded future interval. Unlike the offline
reference schedule, the runtime scheduler is fully adaptive and may
execute HRT sub-jobs earlier than in the reference schedule, dynamically
reorder HRT and SRT work, select different eligible accelerators, leave
processors temporarily idle, or opportunistically exploit runtime slack
for SRT execution.

The scheduler activation period $T_S$ determines the tradeoff between
runtime adaptability and scheduling overhead. Smaller activation periods
improve responsiveness to newly arrived SRT jobs but increase runtime
scheduler overhead.

Let $\tau_{\mathrm{sch}}$ denote an upper bound on the total runtime
overhead including schedule generation, runtime certification,
dispatching, and cancellation handling. This overhead is included in
the certified timing model.

\subsection{Runtime Safety Certification}
\label{subsec:acceptance_problem}

The online scheduler is not trusted for HRT correctness. Consequently,
every runtime-generated candidate prefix must be validated before
execution.

A natural solution would be to directly test whether a candidate prefix
preserves future HRT schedulability. However, exact runtime feasibility
analysis is computationally impractical for heterogeneous DAG-based
systems with precedence constraints and non-preemptive execution.
Instead, the framework employs a bounded-time sufficient acceptance
test derived from the offline reference schedule. Intuitively, the
runtime certification mechanism verifies whether the candidate prefix
preserves enough certified progress to guarantee that all unfinished HRT
work can still complete feasibly after the checked cut.

If the candidate prefix is certified safe, execution proceeds according
to the online schedule. Otherwise, execution falls back to the
corresponding portion of the offline HRT reference schedule and all SRT
work contained in the rejected prefix is discarded.
The acceptance test therefore acts as a deterministic safety envelope
surrounding the online scheduler. Unsafe runtime decisions may reduce
performance, but they cannot compromise HRT correctness.

Section~\ref{sec:theory} formalizes the runtime certification mechanism
through the notion of schedulability-preserving prefix replacement.

\section{Runtime Safety Certification}
\label{sec:theory}

This section formalizes the runtime certification mechanism used to
guarantee hard real-time correctness independently of the online
scheduler.

We first recall the intuition behind transfer schedulability on
uniprocessors and explain why its classical scalar slack formulation
does not directly extend to heterogeneous precedence-constrained
systems. We then introduce a schedulability-preserving prefix
replacement criterion suitable for heterogeneous multi-accelerator DNN
scheduling.

\subsection{From Transfer Schedulability to Prefix Certification}
\label{sec:transfer}

Transfer schedulability~\cite{willemsen2025transfer} provides a way to
preserve schedulability by comparing an online execution with a feasible
reference schedule. In its classical uniprocessor form, the online scheduler
monitors whether the work that must match the reference schedule has become
locally urgent. When a slackless interval is detected, the scheduler must give
priority to the corresponding locally urgent work; otherwise, schedulability
may no longer be transferable from the reference schedule.

We use this idea only as a design principle. We do not attempt to extend the
uniprocessor slackless test directly to heterogeneous multiprocessor DAG
scheduling. In our setting, feasibility depends jointly on machine assignment,
non-preemptive execution, precedence constraints, and future resource
contention across unrelated accelerators. Therefore, a scalar slack quantity is
not sufficient to characterize whether the online state can still match a
feasible continuation.

To see why, consider even the simpler case of a set $\mathcal J$ of independent, non-preemptive,
non-migratory jobs on $M$ identical processors over an interval of length $L$.
Let
\[
    W = \sum_{J\in\mathcal J} C_J
\]
be the total remaining work, being $C_J$ the length of a job $J$. The condition $W > ML$ proves infeasibility, but
the converse is false: $W\le ML$ does not imply that the jobs can be packed on
the individual processors. For example, with two processors, $L=10$, and three
jobs of length $C_J=6$, the total work is $W=3 C_J=18<20$, but no feasible assignment
exists. The corresponding decision problem contains $P||C_{\max}\le L$ as a
special case, which is NP-hard by reduction from \textsc{Partition}.

The setting considered in this paper is strictly more general: processors are
unrelated, since a sub-job $u$ has execution time $C_{u,m}$ on machine $m$, and
HRT jobs are DAGs whose vertices are non-preemptive sub-jobs. Consequently, the
runtime safety layer deliberately uses a bounded-time sufficient test. If the
test accepts, HRT schedulability is preserved; if it rejects, the system falls
back to a certified continuation.

\subsection{Schedulability-Preserving Prefix Replacement}
\label{sec:prefix_replacement}

Let $\mathcal J^{\mathrm H}$ denote the set of generated HRT DAG jobs under
consideration, and let
\[
    \mathcal V^{\mathrm H} =
    \bigcup_{J\in\mathcal J^{\mathrm H}} V_J
\]
be the set of their HRT sub-jobs. Let $S^{ref}$ be a feasible WCET reference
schedule for $\mathcal J^{\mathrm H}$, and let $S^{cand}$ be a candidate runtime
schedule. For a sub-job $u$, we write $m_S(u)$, $s_S(u)$, and $f_S(u)$ for its
machine, start time, and finish time in schedule $S$.

\begin{definition}[Machine-wise cut]
A machine-wise cut is a vector
\[
    T^\star = (t_1,\ldots,t_M),
\]
where $t_m$ is a time instant referred to machine $m$. We refer to $t_m$ as the local cut time on machine $m$.
\end{definition}

\begin{definition}[HRT Schedule Prefix]
Given a schedule $S$ and a machine-wise cut
$T^\star$, the HRT schedule prefix induced by $T^\star$ is
\[
    P_S(T^\star)
    =
    \left\{
    u\in\mathcal V^{\mathrm H}
    \;\middle|\;
    f_S(u) \le t_{m_S(u)}
    \right\}.
\]
That is, the prefix contains all HRT sub-jobs that complete no later than the local cut time of the machine to which they are assigned.

For a job $J$, we denote by
\[
    P_S^J(T^\star)
    =
    P_S(T^\star)\cap V_J
\]
the portion of its HRT sub-jobs belonging to the prefix.
\label{def:scheduleprefix}
\end{definition}

\begin{definition}[Clean cut]
A machine-wise cut $T^\star$ is clean for schedule $S$ if no HRT sub-job crosses
the cut on its assigned machine, i.e.,
\[
    \nexists u\in\mathcal V^{\mathrm H}
    \quad\text{s.t.}\quad
    s_S(u) < t_{m_S(u)} < f_S(u).
\]
SRT sub-jobs are not part of the HRT correctness argument; if
an SRT sub-job crosses any certified cut, it is canceled at the cut
and its partial execution is discarded.
\end{definition}

\begin{definition}[Closed and open HRT jobs]
An HRT job $J$ is closed in schedule $S$ at cut $T^\star$ if
\[
    V_J \subseteq P_S^J(T^\star).
\]
Otherwise, $J$ is open at $T^\star$.
\end{definition}

\begin{definition}[Open-interface sub-jobs]
For a schedule $S$, an HRT sub-job $u\in P_S^J(T^\star)$ is an open-interface
sub-job of job $J$ if it has at least one successor of the same job outside the
prefix:
\[
    \exists v\in V_J\setminus P_S^J(T^\star)
    \quad\text{s.t.}\quad
    (u,v)\in E_J.
\]
We denote the set of such sub-jobs by $I_S^J(T^\star)$.
\end{definition}

\begin{figure}
    \centering
    \includegraphics[width=\linewidth]{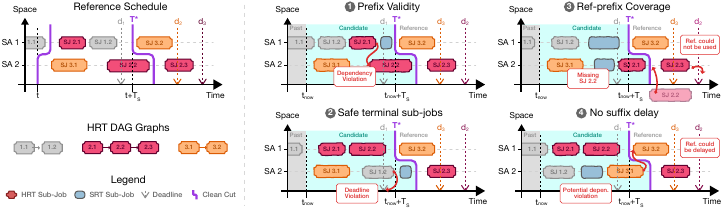}
    \caption{Counterexamples illustrating the necessity of the four conditions in Definition~\ref{def:safeprefixreplacement} required by Theorem~\ref{th:safeprefixreplacement}. Each panel violates one condition while the remaining conditions are satisfied, showing that omitting any single condition may invalidate the schedulability-preserving prefix replacement guarantee.}
    \label{fig:counterexample}
\end{figure}

\begin{definition}[Safe prefix replacement]
\label{def:safeprefixreplacement}
Let $S^{ref}$ be a feasible HRT reference schedule and let $S^{cand}$ be a valid
candidate schedule. We say that $S^{cand}$ is a safe prefix replacement of $S^{ref}$ at
cut $T^\star$ if the following conditions hold.

\begin{enumerate}
\item \textbf{Candidate-prefix validity.}
The HRT prefix $P_{S^{cand}}(T^\star)$ satisfies eligibility, machine
exclusivity, non-parallel execution, and precedence closure: for every
edge $(u,v)\in E_J$, if $v\in P^J_{S^{cand}}(T^\star)$, then
$u\in P^J_{S^{cand}}(T^\star)$ and $f^{cand}_u\le s^{cand}_v$.

   \item \textbf{Completed terminal HRT sub-jobs are already safe.}
    For every HRT job $J$, every terminal sub-job completed
    in the candidate prefix finishes no later than the
    absolute deadline of its job:
\[
    \forall u \in P_{S^{cand}}^J(T^\star),
    \qquad
    \bigl(\nexists v \in V_J : (u,v)\in E_J\bigr)
    \Longrightarrow
    f_u^{cand}\le d_J .
\]
    Here, a terminal sub-job has no outgoing edge in the
    complete DAG of its job. This condition applies to
    both open and closed jobs.

    \item \textbf{Candidate-prefix covers reference-prefix.}
    For every HRT job $J$ in $S^{ref}$, all sub-jobs of
    $J$ that the reference schedule has completed before the cut $T^\star$ are also
    completed in the candidate prefix:
    \[
        P_{S^{ref}}^J(T^\star)
        \subseteq
        P_{S^{cand}}^J(T^\star).
    \]

\item \textbf{No delay of reference-suffix releases.}
For every HRT job $J$ that is open in $S^{cand}$ at $T^\star$, if a completed sub-job $u$ has an unfinished successor $v$, then $u$ must complete no later than the start time of $v$ in the reference schedule. Formally,
for every edge $(u,v)\in E_J$ such that
\[
    u\in P_{S^{cand}}^J(T^\star)
    \qquad\text{and}\qquad
    v\notin P_{S^{cand}}^J(T^\star),
\]
we require
\[
    f^{cand}_u \le s^{ref}_v .
\]
This prevents a sub-job completed by the candidate prefix from
delaying any unfinished successor when execution resumes according to the
reference suffix.

\noindent\emph{Remark.}
A simpler but more conservative runtime implementation can replace the
successor-start comparison above with the stronger sufficient check
$f^{cand}_u \le f^{ref}_u$ for open-interface sub-jobs, since the validity of the
reference schedule implies $f^{ref}_u \le s^{ref}_v$ for every edge~$(u,v)$. 
\end{enumerate}

\end{definition}

\begin{theorem}[Schedulability-preserving prefix replacement]
\label{th:safeprefixreplacement}
Let $S^{ref}$ be a feasible WCET reference schedule for the HRT jobs
$\mathcal J^{\mathrm H}$, and let $S^{cand}$ be a candidate schedule. Assume that
$T^\star$ is clean for both $S^{ref}$ and $S^{cand}$. If $S^{cand}$ is a safe prefix
replacement of $S^{ref}$ at $T^\star$, then replacing the HRT prefix of $S^{ref}$ with
the HRT prefix of $S^{cand}$ preserves HRT schedulability.
\end{theorem}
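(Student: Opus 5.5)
The plan is to exhibit an explicit feasible continuation. We build a hybrid schedule $S^{hyb}$ that executes the HRT prefix $P_{S^{cand}}(T^\star)$ exactly as in $S^{cand}$. It then executes every remaining HRT sub-job $v\in\mathcal V^{\mathrm H}\setminus P_{S^{cand}}(T^\star)$ exactly as in $S^{ref}$, with the same machine $m_{S^{ref}}(v)$ and the same interval $[s^{ref}_v,f^{ref}_v)$. We then check the four validity conditions of Section~\ref{subsec:scheduling-model} and the deadlines for $S^{hyb}$. No search or re-timing is needed: the suffix is literally a subset of the reference suffix. Condition~3 guarantees that every suffix sub-job of $S^{hyb}$ lies outside $P_{S^{ref}}(T^\star)$. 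Together with cleanness of $T^\star$ for $S^{ref}$, this gives $s^{ref}_v\ge t_{m_{S^{ref}}(v)}$ for every suffix sub-job $v$.

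\textbf{Step 1 (eligibility, non-parallelism).} These are immediate. Prefix sub-jobs inherit them from Condition~1, suffix sub-jobs inherit them from the validity of $S^{ref}$, and each sub-job is placed exactly once because prefix and suffix are disjoint by construction.

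\textbf{Step 2 (machine exclusivity).} Within the prefix it follows from Condition~1, and within the suffix from $S^{ref}$. For a prefix sub-job $u$ and a suffix sub-job $v$ on the same machine $m$, we have $f^{cand}_u\le t_m$ by the definition of the prefix. We also have $s^{ref}_v\ge t_m$, because $v\notin P_{S^{ref}}(T^\star)$ and the cut is clean for $S^{ref}$. So the two intervals are disjoint.

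\textbf{Step 3 (precedence).} Take an edge $(u,v)\in E_J$ and split into cases.
\begin{itemize}
\item Both endpoints in the prefix: covered by Condition~1.
\item Both endpoints in the suffix: inherited from $S^{ref}$.
\item $v$ in the prefix and $u$ in the suffix: impossible by the precedence closure in Condition~1.
\item $u$ in the prefix and $v$ in the suffix: then $u\in I^J_{S^{cand}}(T^\star)$, and Condition~4 gives $f^{cand}_u\le s^{ref}_v$. This is exactly the needed inequality.
\end{itemize}

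\textbf{Step 4 (deadlines).} $f_J=\max_{u\in V_J}f_u$ is attained at a terminal sub-job. It suffices to bound terminal sub-jobs, since non-terminal sub-jobs finish before some terminal successor by the precedence established in Step 3. Terminal sub-jobs in the prefix satisfy $f^{cand}_u\le d_J$ by Condition~2. Terminal sub-jobs in the suffix finish at $f^{ref}_u\le f^{ref}_J\le d_J$ by feasibility of $S^{ref}$.

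Hence $S^{hyb}$ is a valid WCET schedule meeting all HRT deadlines, and it coincides with $S^{cand}$ before the cut. Replacing the prefix therefore preserves schedulability. Two points need justification beyond this. First, the runtime layer can always fall back to this suffix under actual execution times no larger than the WCETs, since sub-jobs are simply dispatched at their reference start times or idle until then. Second, SRT sub-jobs crossing the cut are cancelled, so they cannot block $S^{ref}$ machines.

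The main obstacle I expect is Step 2, specifically the cross interaction on a machine where the candidate reassigned a sub-job. A sub-job $u$ that $S^{ref}$ maps to machine $m'$ may be executed by $S^{cand}$ on machine $m$. Then $u$ is compared against $t_m$ rather than $t_{m'}$, and one must ensure that no suffix sub-job on $m$ starts before $t_m$. Reference sub-jobs on $m$ that lie outside $P_{S^{ref}}(T^\star)$ but would start before $t_m$ are excluded only through cleanness plus the prefix definition. I would state that lemma carefully first, using $s^{ref}_v<t_m\Rightarrow f^{ref}_v\le t_m$ by cleanness, hence $v\in P_{S^{ref}}\subseteq P_{S^{cand}}$, so such a $v$ is never in the suffix.
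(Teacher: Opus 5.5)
Your proposal is correct and follows essentially the same route as the paper's proof. Both keep the candidate HRT prefix and run every remaining HRT sub-job in its reference slot, then check the resulting schedule in the same way:
\begin{itemize}
\item disjointness follows from Condition~3 together with cleanness of $T^\star$ for $S^{ref}$;
\item precedence follows from Condition~1 (including precedence closure) and Condition~4;
\item deadlines follow from Condition~2 and reference feasibility, plus a path-to-terminal argument.
\end{itemize}
The \emph{main obstacle} you flag about reassigned sub-jobs is already fully handled by your Step~2. A suffix sub-job placed on machine $m$ in the hybrid schedule sits on $m$ in $S^{ref}$ and lies outside $P_{S^{ref}}(T^\star)$, so cleanness forces $s^{ref}_v\ge t_m$ and no separate lemma is needed.
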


\begin{proof}
Keep the candidate prefix and execute every remaining HRT sub-job
in its reference interval on its reference machine, skipping all
sub-jobs already completed in the candidate prefix.

By Condition~3, every remaining sub-job lies outside the reference
prefix. Since $T^\star$ is clean for $S^{ref}$, its reference execution
starts no earlier than the local cut on its reference machine.
On each machine, candidate-prefix executions finish no later than
the local cut, so they cannot overlap the retained reference
executions. Together with Condition~1 and reference validity,
this preserves eligibility, machine exclusivity, and non-parallel
execution.

Precedence constraints within the candidate prefix hold by
Condition~1, while those between remaining sub-jobs hold because
their reference intervals are unchanged. For an edge $(u,v)$
from the candidate prefix to a remaining sub-job, Condition~4
ensures $f_u^{cand}\le s_v^{ref}$. Conversely, an edge from a
remaining sub-job into the candidate prefix is excluded by
the precedence closure in Condition~1. Thus all precedence
constraints are preserved.

Finally, consider any HRT job $J$. Every terminal sub-job in its
candidate prefix meets $d_J$ by Condition~2, whether $J$ is open
or closed. Every remaining terminal sub-job meets $d_J$ because
it executes in its feasible reference interval. Since every
non-terminal vertex of a finite DAG has a path to a terminal
vertex, precedence preservation implies that all sub-jobs of
$J$ finish by $d_J$. Hence all HRT jobs meet their deadlines.
\end{proof}

\section{Online Scheduler}
\label{sec:scheduler}

This section describes how the proposed framework applies the
schedulability-preserving prefix replacement theory of
Section~\ref{sec:theory} at runtime.

The framework employs an online scheduler that periodically generates
candidate schedule prefixes for the currently available HRT and SRT
sub-jobs. The scheduler itself is not trusted for hard real-time
correctness. Consequently, every runtime-generated candidate prefix is
validated through the runtime certification mechanism derived from
Theorem~\ref{th:safeprefixreplacement} before execution.

If the candidate prefix satisfies the certification conditions, it is
committed for execution. Otherwise, the system falls back to the
corresponding portion of the certified HRT reference schedule and any
SRT work contained in the rejected prefix is discarded.

Although the framework is scheduler-agnostic, this work employs a
reinforcement learning (RL)-based scheduler that jointly optimizes
accelerator assignment and execution ordering to improve SRT QoS and
accelerator utilization while remaining enclosed within the deterministic
HRT safety envelope.

Section~\ref{sec:check} describes the runtime HRT certification
mechanism, Section~\ref{sec:rl} briefly reviews the RL background used
in this work, and Section~\ref{sec:policy} presents the proposed RL
scheduler policy.

\subsection{Runtime HRT Certification}
\label{sec:check}

\begin{figure}
    \centering
    \includegraphics[width=\linewidth]{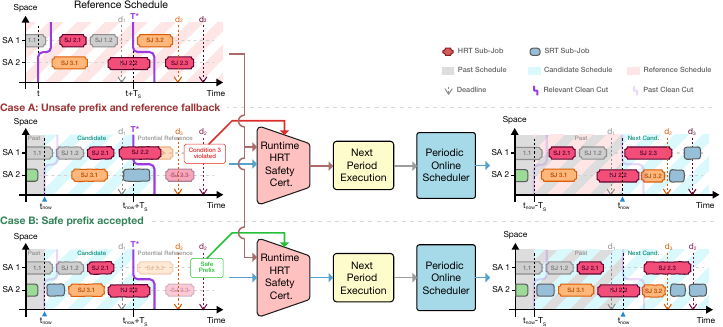}
    \caption{Runtime HRT certification on two sub-accelerators. The candidate prefix is checked against the feasible reference schedule at the machine-wise clean cut $T^\star$ (purple). In Case~A, sub-job SJ~2.2 belongs to the reference prefix but is not completed within the candidate prefix, violating the reference-prefix coverage condition (Condition~3 of Definition~\ref{def:safeprefixreplacement}). The candidate is rejected, and execution falls back to the reference schedule, interrupting SRT execution when necessary. In Case~B, the candidate satisfies the certification conditions and is accepted. After execution, the periodic online scheduler generates the next candidate prefix.}
    \label{fig:certification}
\end{figure}

The runtime safety layer certifies finite HRT prefixes against the WCET
reference schedule. Let $S^{\mathrm{ref},\infty}$ denote the infinite periodic
extension of the feasible reference schedule. All comparisons below are made on
absolute time in this infinite extension; the modulo-$H$ representation is used
only for table lookup in the implementation.

At scheduler activation time $t_k$, the online scheduler observes the current
state and proposes a candidate prefix. Let
$
    h_k = t_k + T_S + \tau_{\mathrm{sch}}
$
be the nominal certification horizon, including the upper bound on runtime
scheduling overhead. For each machine $m$, the safety layer defines the cut
component $t_m^\star$ as the first time not earlier than $h_k$ at which no HRT
sub-job of $S^{\mathrm{ref},\infty}$ is executing on machine $m$. Equivalently,
if an HRT sub-job is executing on machine $m$ at $h_k$, then $t_m^\star$ is the
finish time of that sub-job in the reference schedule; otherwise,
$t_m^\star=h_k$.

The resulting machine-wise cut
$
    T^\star=(t_1^\star,\ldots,t_M^\star)
$
is clean for the reference schedule by construction. The online scheduler must
provide a candidate HRT prefix up to the same machine-wise cut. The candidate is
rejected unless $T^\star$ is also clean for the candidate schedule and the
safe-prefix-replacement conditions of
Definition~\ref{def:safeprefixreplacement} hold.

If the test \emph{accepts}, the candidate prefix is committed up to $T^\star$. Importantly, the certification conditions in Definition~\ref{def:safeprefixreplacement} are expressed in terms of start and finish times and do not require HRT sub-jobs to execute for their full WCET; hence, candidate prefixes may exploit earlier completions as long as the certification conditions remain satisfied.

If the
test \emph{rejects}, all SRT work in the rejected prefix is discarded and the system
executes the corresponding certified reference prefix. In this case, execution conservatively follows the WCET-based reference schedule: the reference slot of each HRT sub-job remains reserved for its full duration even if the sub-job completes earlier, without reclaiming the resulting slack.

Since after each accepted or fallback prefix the remaining HRT work
admits the corresponding reference suffix as a feasible continuation,
induction over scheduler activations guarantees HRT schedulability
for the whole runtime execution.

The nominal parameter $T_S$ controls how frequently the scheduler attempts to
create new candidate prefixes. 

\subsection{Background on Reinforcement Learning}
\label{sec:rl}

RL is a machine learning paradigm in which an
agent learns a decision policy through repeated interaction with an
environment. At each decision step $t$, the agent observes the current
state $s_t$, selects an action $a_t$ according to a policy
$\pi(a_t|s_t)$, and receives a scalar reward $r_t$ reflecting the
quality of the selected action. The environment then transitions to a
new state $s_{t+1}$~\cite{sutton1998reinforcement}.

The objective of RL is to learn a policy maximizing the expected
discounted cumulative reward:
$
J(\pi)
=
\mathbb{E}_{\pi}
\left[
\sum_{t=0}^{\infty}
\eta^t r_t
\right],
$
where $\eta \in [0,1)$ is the reward discount factor.

In actor--critic methods, the policy is represented by an
\emph{actor} network, while a separate \emph{critic} network estimates
the expected long-term utility of actions. 
More specifically, the critic
approximates the action-value function
$
Q^\pi(s,a)
=
\mathbb{E}_{\pi}
\left[
\sum_{k=0}^{\infty}
\eta^k r_{t+k}
\;\middle|\;
s_t=s,\ a_t=a
\right].
$

This work employs Deep Deterministic Policy Gradient
(DDPG)~\cite{lillicrap2015continuous}, an off-policy actor--critic
algorithm designed for continuous action spaces. DDPG represents both
actor and critic through deep neural networks and updates them during the \emph{training} phase, using
transitions collected during runtime interaction with the environment.
Training samples are stored in a replay buffer as tuples
$
(s_t,a_t,r_t,s_{t+1}),
$
which are later sampled uniformly to reduce temporal correlation and
improve training stability. The critic is trained to minimize the
Bellman error:
$
L(\phi)
=
\mathbb{E}
\left[
\left(
Q_\phi(s_t,a_t)
-
y_t
\right)^2
\right],
$
where the target value is
$
y_t
=
r_t
+
\eta
Q_{\phi^-}
\left(
s_{t+1},
\pi_{\theta^-}(s_{t+1})
\right),
$
and $\phi^-$ and $\theta^-$ denote slowly updated target-network
parameters.
The actor parameters $\theta$ are updated through deterministic policy
gradients:
$
\nabla_\theta J(\theta)
=
\mathbb{E}
\left[
\nabla_a Q_\phi(s,a)
\big|_{a=\pi_\theta(s)}
\nabla_\theta \pi_\theta(s)
\right].
$
After the \emph{training} phase, the critic is no longer used during \emph{deployment}, and the actor alone serves as the scheduling policy.
DDPG is particularly suitable for scheduling problems because it
naturally supports high-dimensional continuous action spaces, allowing
the policy to jointly encode scheduling priorities and accelerator
selection decisions.

\subsection{RL-based Scheduler Policy}
\label{sec:policy}

The online scheduler considered in this work is based on the DDPG~\cite{lillicrap2015continuous} algorithm described in Section~\ref{sec:rl}. We considered alternative RL formulations, including discrete-action methods such as Q-learning. However, directly representing the joint sub-job ordering and accelerator-assignment decisions as discrete actions would result in a combinatorial action space that grows rapidly with the number of ready sub-jobs and available accelerators. Even other formulations, such as multi-armed bandit formulations~\cite{bandit1,bandit2}, are also not well suited to this problem because scheduling decisions are inherently sequential and affect future system states; capturing such state transitions and delayed consequences therefore requires a sequential decision-making formulation. We adopt DDPG and encode the scheduling decisions through continuous priority values, which are subsequently decoded into execution priorities and accelerator assignments.

Importantly, priorities are an internal mechanism used by the online policy to encode and construct a candidate schedule. Once the candidate schedule has been generated, its validity and certification depend only on the resulting machine assignments, execution intervals, precedence constraints, and not on the priority values that produced it. Therefore, priorities do not appear in Definition~\ref{def:safeprefixreplacement}.

We consider the state as the ready sub-jobs and the current accelerator occupancy. At each scheduler activation, the policy observes the current runtime state and generates a candidate schedule by jointly selecting sub-job priorities and accelerator assignments. This formulation allows the scheduler to adapt to time-varying workload arrivals and heterogeneous accelerator availability while accounting for precedence constraints, resource contention, and execution tails.

\begin{figure}
    \centering
    \includegraphics[width=\linewidth]{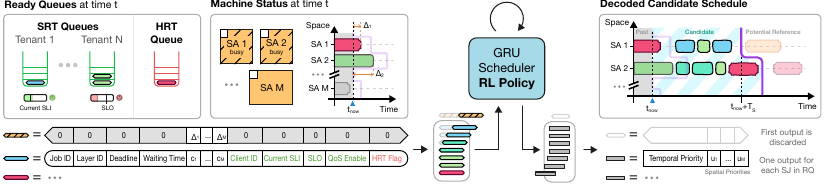}
    \caption{State and action encoding of the RL scheduler. The input sequence is prepended with a vector encoding the residual occupancy of each accelerator: an idle accelerator is represented by zero (e.g., $\Delta_M=0$), whereas a busy accelerator is represented by its remaining busy time (e.g., $\Delta_1>0$). Each subsequent element encodes a ready HRT or SRT sub-job. For each input element, the policy outputs a temporal priority and $M$ accelerator-selection scores. The first output, corresponding to the accelerator-occupancy state vector rather than to a sub-job, is discarded.}
    \label{fig:encoding}
\end{figure}

As shown in Figure~\ref{fig:encoding}, the runtime state is encoded as a sequence of fixed-length feature
vectors, one for each ready sub-job in ready queue $RQ_t$, naturally supporting a variable
number of schedulable sub-jobs. The feature representation includes
model and layer identifiers, relative deadlines, waiting times,
profiled execution latencies for each
accelerator, the identifier of the client generating the request, the
current client-level deadline-hit ratio
$H_k^{\mathrm S}$, the corresponding target SLO achievement rate
$\hat H_k^{\mathrm S}$, and two boolean flags identifying HRT and
QoS-aware sub-jobs. Each feature vector therefore has length
$9+M$, where $M$ is the number of accelerators.

The input sequence is preceded by an additional $9+M$-dimensional all-zero vector except $M$ entries, which show the residual busy time of each accelerator inherited from previously
dispatched execution tails, allowing the policy to reason about
near-future accelerator availability and congestion. Ready sub-jobs are then provided in deadline order.

For each ready sub-job, the actor network emits an action encoded as an $(M+1)$-dimensional vector, as follows:
(i) a scalar temporal priority and
(ii) an $M$-dimensional spatial priority vector over the accelerators.
The maximum spatial priority among the eligible accelerators in $\mathcal{M}_v$ determines the selected accelerator.

The continuous policy outputs are decoded into a discrete schedule by
dispatching, whenever an accelerator becomes available, the
highest-priority eligible sub-job onto the accelerator selected by the
spatial priority vector.

Both actor and critic are implemented as recurrent neural networks to
support variable-sized ready queues. In our implementation, we employ
GRU layers for improved computational efficiency. The recurrent encoder
captures sequence-level scheduling context such as deadline structure,
accelerator contention, and queue composition. The critic estimates the
action-value function $Q_\phi(s,a)$ from the recurrent hidden state.

Training follows standard off-policy DDPG methodology. Runtime
transitions
$
(s_t,a_t,r_t,s_{t+1})
$
are stored in a replay buffer and sampled uniformly to reduce temporal
correlation. Target networks and soft parameter updates are used to
stabilize training~\cite{lillicrap2015continuous}. Exploration noise is
applied only during training.

We designed a reward function to prioritize deadline satisfaction and, among feasible outcomes, improve client-level QoS balance and accelerator utilization. The relative importance of the different reward components is controlled by tunable parameters, which are set before training and can be selected empirically, e.g., through grid search, according to the desired trade-off among the optimization objectives.
The reward is decomposed into four components:
\begin{align}
r_t
&=
R^{\mathrm{deadline}}_t
+
R^{\mathrm{slack}}_t
+
R^{\mathrm{QoS}}_t
+
R^{\mathrm{HRT}}_t.
\end{align}

The first three reward components, \textit{i.e.}, $R^{\mathrm{deadline}}_t$, $R^{\mathrm{slack}}_t$, and $R^{\mathrm{QoS}}_t$, pertain to the SRT workload, whereas $R^{\mathrm{HRT}}_t$ pertains to the HRT workload.

The deadline satisfaction component, which penalizes deadline misses, is:
\begin{align}
R^{\mathrm{deadline}}_t
&=
\sum_{l \in RQ_t^{SRT}}
\Delta_{t,l}
A_{t,l},
\end{align}
where
\begin{align}
A_{t,l}
&=
\begin{cases}
\alpha, & f_l \le d_l \\
-\beta, & \text{otherwise}.
\end{cases}
\end{align}

where $\alpha$ and $\beta$ are tunable reward hyperparameters.

The normalized completion-time improvement component, which favors earlier completions, is:
\begin{align}
R^{\mathrm{slack}}_t
&=
\gamma
\sum_{l \in RQ_t^{SRT}}
\Delta_{t,l}
\frac{d_l - f_l}{d_l-a_l}.
\end{align}

The client-level QoS balancing component, which encourages client-level SLO achievement rates to remain close to their target values, is:
\begin{align}
R^{\mathrm{QoS}}_t
&=
-\lambda
\sum_{l \in RQ_t^{SRT}}
\Delta_{t,l}
\left|
H_{\kappa_l}^{\mathrm S}
-
\hat H_{\kappa_l}^{\mathrm S}
\right|,
\end{align}
where $H_{\kappa_l}^{\mathrm S}$ is the current deadline-hit ratio of
the client generating sub-job $l$, and
$\hat H_{\kappa_l}^{\mathrm S}$ is the corresponding target SLO
achievement rate defined in
Section~\ref{subsec:client-qos}; $\lambda$ is a tunable reward hyperparameter.

Finally, HRT deadline misses are strongly penalized through:
\begin{align}
R^{\mathrm{HRT}}_t
&=
-\mu
N^{\mathrm{HRT}}_{\mathrm{miss}}(t),
\end{align}
where $N^{\mathrm{HRT}}_{\mathrm{miss}}(t)$ is the number of HRT
deadline misses observed during the scheduling window and
$\mu \gg 1$ is a large penalty coefficient.

The over-the-horizon weighting factor is:
\begin{align}
\Delta_{t,l}
&=
\begin{cases}
1, & f_l < t + T_S \\
\delta, & \text{otherwise}.
\end{cases}
\end{align}

During training, any HRT deadline miss produces a catastrophic negative reward through $R^{\mathrm{HRT}}_t$. Unlike for the SRT workload, where $R^{\mathrm{slack}}_t$ encourages earlier completion, the HRT reward does not incentivize completion earlier than necessary. During deployment, exploration is
disabled and candidate schedules are executed only if accepted by the
schedulability-preserving runtime certification mechanism described in
Section~\ref{sec:theory}. Consequently, the RL policy affects only
QoS-aware optimization and candidate schedule quality, while HRT
correctness is entirely enforced by the deterministic safety layer.
Regarding the reference schedule construction, in the experiments we have employed several techniques. The main two are: EDF ordering with earliest-completion accelerator selection, and a recurrent heuristic inspired by~\cite{RELMAS1} favouring larger deadline slack. Both methods are followed by deterministic validation of precedences and HRT deadlines.

\section{Experimental Results}

\subsection{Experimental Setup}
We evaluate the proposed framework on a heterogeneous multi-accelerator
platform with six DNN accelerators, following the established heterogeneous setup of~\cite{RELMAS1,RELMAS2}, enabling the comparison with prior scheduling techniques.
In architectural terms, three accelerators implement a row-stationary dataflow and an organization
close to Eyeriss~\cite{chen2019eyeriss}; the other three implement a
weight-stationary dataflow and an organization close to Simba~\cite{shao_micro19}.
For each accelerator family, two instances are small and one instance is large.
Hence, the platform exposes both dataflow heterogeneity and size heterogeneity,
and sub-job execution times depend on the selected accelerator.
We consider representative DNN inference workloads from image classification
and object detection. The image-classification set includes AlexNet,
InceptionV3, and ResNet50, while the object-detection set includes YOLO-v2.
Each inference request is represented as a DAG of layer-level sub-jobs, with
precedence constraints induced by the network topology. For every layer and
eligible accelerator, we use profiled execution latencies as the scheduling
costs.
SRT arrivals follow a Google-like workload model derived from large-scale
production traces. Following the workload generator of
Da Costa et al.~\cite{da2018modeling}, inter-arrival times are sampled from a
heavy-tailed Pareto process. Let \(d\) denote the target mean inter-arrival time,
corresponding to the generator's dynamism parameter. For each submitted SRT job
\(i\), we sample
$
    X_i \sim \mathrm{Pareto}(\alpha=4,x_m=1)
$
and set
$
    a_i = a_{i-1} + 3d(X_i-1).
$
Since \(\mathbb{E}[X_i-1]=1/3\), the expected inter-arrival time is
$
    \mathbb{E}[a_i-a_{i-1}] = d .
$
We evaluate two workload intensities: \(d=40\,\mathrm{ms}\), denoting a
moderate-pressure workload, and \(d=10\,\mathrm{ms}\), denoting a high-pressure
workload with four times more frequent SRT arrivals.

Each SRT request is associated with a client \(k\in\mathcal K\). 
For each
client, we measure the client-level SLO achievement rate \(H_k^{\mathrm S}\),
defined as the fraction of that client's SRT jobs that complete by their soft
deadline. Client targets are chosen from
$
    \hat H_k^{\mathrm S}\in\{70\%,80\%,90\%\}.
$
The aggregate SRT hit ratio \(H^{\mathrm S}\) measures system-level SRT
timeliness, while the quantity
$
    \min_{k\in\mathcal K}
    \left(H_k^{\mathrm S}-\hat H_k^{\mathrm S}\right)
$
captures the worst client-level SLO margin. If this value is non-negative, all
clients meet their target SLO achievement rate; negative values indicate that at
least one client is below its target. Therefore, our secondary objective, after
preserving HRT schedulability, is to improve \(H^{\mathrm S}\) while avoiding
negative client-level SLO margins for each client.

Across all certified experiments, no admitted HRT job misses its
deadline; this only validates the implementation, whereas the HRT
guarantee follows from Theorem~\ref{th:safeprefixreplacement}. The measurement simply
confirms that the implementation follows the certified execution protocol.
Regarding the reward parameters, we employed $\alpha=0.01$ for SRT deadline hits, $\beta=0.11$ for SRT misses, $\mu=30$ for HRT misses, $\lambda=0.1$ for client-level QoS imbalance, and $\delta=0.05$ for outcomes beyond the current scheduling window. These coefficients were selected through a grid search over training simulations. For example, $\mu$ was chosen so that HRT misses make the total reward negative despite the other components, while the remaining terms were balanced to prevent any secondary objective from dominating.

\subsection{Baselines}
\label{sec:baselines}

We compare against representative online DNN schedulers, slack-based real-time
mechanisms, and simple dispatch heuristics, all implemented at layer granularity
with the same profiled accelerator latencies. Methods that do not natively
provide HRT guarantees in our heterogeneous DAG-based model are evaluated as
HRT-safe wrapped variants, obtained by enclosing them in the same
prefix-replacement safety envelope used by our framework.
This also shows that the proposed certification-and-fallback mechanism is not
specific to our RL scheduler. Any online scheduling policy can be enclosed
within the same safety envelope: if the policy is effective at scheduling real-time jobs on heterogeneous platforms, the fallback mechanism can be used
to make its execution HRT-safe.
We include PREMA~\cite{choi2020prema}, adapting its
priority/allocation logic to our heterogeneous layer-level simulator. In particular, our adaptation selects among the highest-aging ready layers according to their expected latency, and assigns them to accelerators considering current availability. We also
include a client-aware RL policy~\cite{RELMAS2} that augments the policy state and reward with client-level SLO information
based on the deadline-hit rate SLI.
We also employ the algorithms proposed by Tang et al.~\cite{tang2011combining} by constructing an HRT schedule
with H-EDF or FF, deriving per-accelerator slack windows, and admitting SRT
layers only when they can be placed by RC, TC, or TCSD without violating release
times, latest-start-time bounds, machine order, or DNN precedence constraints.
We implement SSML~\cite{slackstealing}, originally formulated for a uniprocessor setting, as a conservative scheduler-core adaptation that computes a modified-laEDF slack budget $\sigma$ from the remaining HRT layer workload and admits a non-preemptive SRT layer only if its whole execution fits both the available slack budget and the current scheduling window.
We adapt CD-MSA~\cite{wang2023cdmsa} by applying its EDF ordering, $\sum_{i=1}^{k} C_i \le D_k$ schedulability test, and laxity-aware promotion rule at the DNN-request level, while estimating remaining execution costs from our heterogeneous layer latencies and subsequently mapping the selected layers to accelerators.
We also include SJF, SSJF, FCFS, and LCFS as simple online dispatch
heuristics for ready DNN layers, augmented with a heuristic to select, for each sub-job, the accelerator expected to yield the shortest completion time based on dataflow affinity and current resource utilization.


\subsection{Experimental Results}
\label{sec:experimental_results}

\paragraph{HRT-safe evaluation protocol.}
All methods are evaluated under preserved HRT schedulability: every admitted
HRT job must meet its deadline. 
For baselines that do not provide HRT guarantees in our heterogeneous
DAG-based model, including PREMA~\cite{choi2020prema}, the client-aware RL baseline~\cite{RELMAS2}, CD-MSA~\cite{wang2023cdmsa}, SJF, SSJF, FCFS, and LCFS, we report their certified variants, obtained by applying the
same prefix-replacement safety envelope used by our scheduler. This evaluation
isolates the quality of the online scheduling policy under a common HRT
safety mechanism. We also report an ablation without the safety envelope to
show the HRT losses incurred by the raw policies. In our experiments, we also include the natively certified FF+RC/FF+TC/FF+TCSD~\cite{tang2011combining} algorithms\footnote{We have chosen FF over H-EDF as the periodic task scheduler since it was designed specifically as a heterogeneous task scheduler and it results in the best performance in our experiments.}.

\begin{figure}[!t]
    \centering
    \includegraphics[width=0.89\columnwidth]{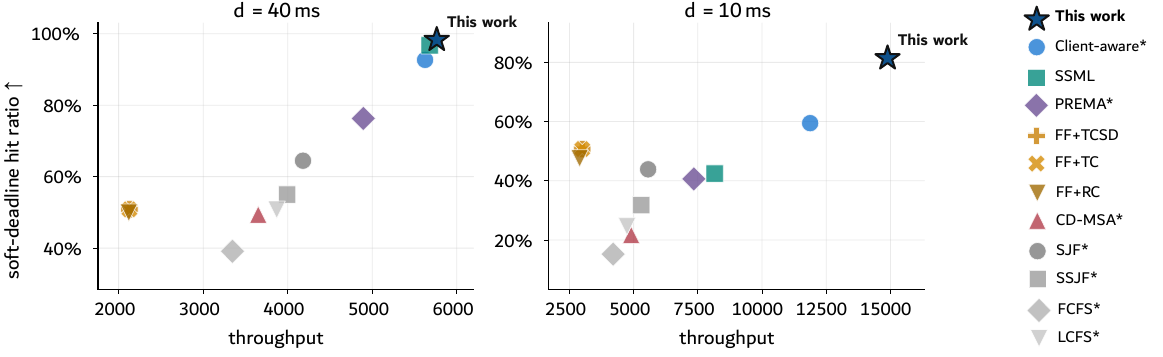}

    \caption{Throughput versus SRT hit ratio \(H^S\). The paired plot compares
    the moderate-pressure workload, 40 ms inter-arrival, with the high-pressure
    workload, 10 ms inter-arrival. (*) denotes a baseline wrapped with the proposed HRT safety envelope; rejected prefixes fall back to \(S^{\mathrm{ref}}\).}
    \label{fig:throughput_vs_srt_pair}
\end{figure}

\begin{figure}[!t]
    \centering
    \includegraphics[width=0.87\columnwidth]{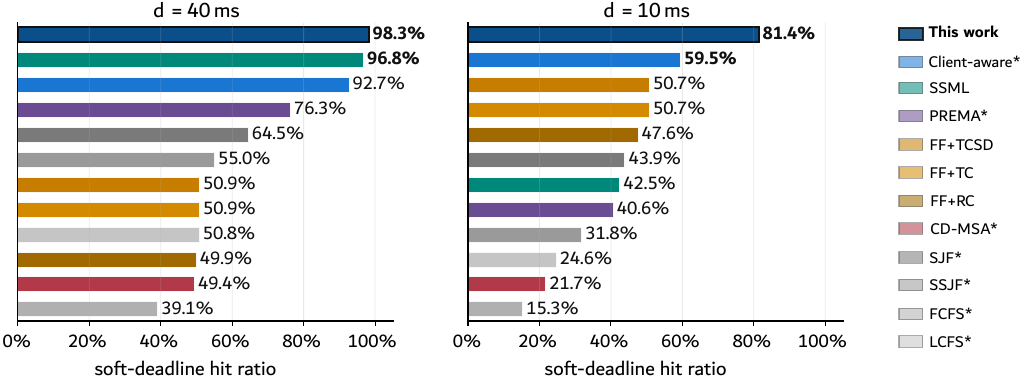}

    \caption{SRT hit ratio \(H^S\) under moderate-pressure and high-pressure
    workloads. At 40 ms, several methods remain competitive; at 10 ms, most baselines sharply lose SRT deadline satisfaction, while the proposed
    scheduler maintains a substantially higher \(H^S\). (*) denotes a baseline wrapped with the proposed HRT safety envelope; rejected prefixes fall back to \(S^{\mathrm{ref}}\).}
    \label{fig:srt_hit_ratio_pair}
\end{figure}

\paragraph{System-level SRT performance.}
At 40 ms inter-arrival, the workload is moderate and the best methods are close:
our scheduler reaches \(H^S=98.29\%\)\footnote{We trained and tested policies under different arrival rates. In this graph, we report the results coming from one policy trained with $d=10ms$ and evaluated at both $d=10ms$ and $d=40ms$. At $d=40ms$, this work achieved a $98.29\%$ SRT hit ratio, compared with $97.43\%$ when employing a policy trained directly at $d=40 ms$, showing effective generalization of the proposed approach.}, followed by SSML at 96.75\% and the client-aware at
92.70\%. The difference appears when SRT arrivals become more frequent. At
10 ms, our scheduler still reaches \(H^S=81.41\%\), while the best baselines, the client-aware and
FF+TC/FF+TCSD, stop at 59.50\% and 50.74\%, respectively. This is a \(+21.91\) percentage-point gain
over the best competing SRT hit ratio. The throughput follows the same trend, as depicted in Figure~\ref{fig:throughput_vs_srt_pair}.

\paragraph{Impact of the reference schedule.}
We conducted a study using 20 distinct feasible HRT reference schedules alternative to the one used in Figure~\ref{fig:srt_hit_ratio_pair} ($d=10ms$), generated via different fine-tuned actor networks and summing seed-controlled Gaussian perturbations to the actor outputs. The results showed $H^S$ from $80.39\%$ to $81.93\%$.

\begin{figure}[!t]
    \centering
    \includegraphics[width=0.89\columnwidth]{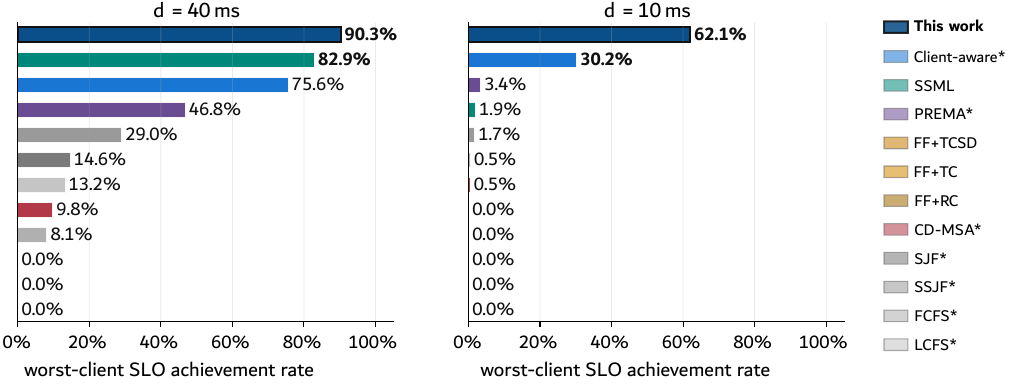}
    \caption{Client-level SRT performance. Worst-client SLO achievement rate,
    \(\min_k H^S_k\), under preserved HRT schedulability. The paired plot
    compares the moderate-pressure workload, 40 ms inter-arrival, with the
    high-pressure workload, 10 ms inter-arrival. At 40 ms, the proposed scheduler
    reaches \(90.3\%\); at 10 ms, it keeps the worst client at \(62.1\%\), while
    the strongest baseline drops substantially. (*) denotes a baseline wrapped with the proposed HRT safety envelope; rejected prefixes fall back to \(S^{\mathrm{ref}}\).}
    \label{fig:worst_tenant_results}
\end{figure}

\paragraph{Worst-client SLO achievement rate.}
The aggregate \(H^S\) alone is not sufficient: a scheduler may complete many SRT
jobs while starving some clients. Figure~\ref{fig:worst_tenant_results} reports
\(\min_k H^S_k\). At 40 ms, our scheduler reaches 90.32\%, followed by SSML at
82.93\% and the client-aware at 75.59\%. At 10 ms, while the client aware drops to 30.19\%, the other baselines collapse at client level: PREMA drops to 3.39\%, SSML to 1.94\%, SSJF to 1.70\%, CD-MSA to 0.48\%, and
several methods reach 0.00\%. Our scheduler keeps the worst client at 62.05\%,
i.e., \(+31.86\) percentage points over the best baseline.

\begin{figure}[!t]
    \centering

    \begin{minipage}[t]{0.40\columnwidth}
        \centering
        \includegraphics[width=\linewidth]{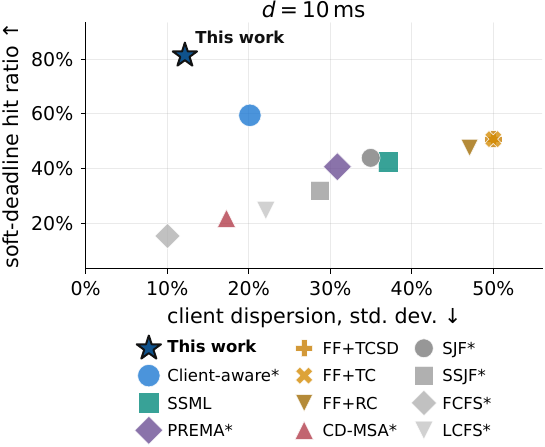}
        
        {\scriptsize (a) \(H^S\) vs. std. dev. of \(H^S_k\).}
    \end{minipage}
    \hfill
    \begin{minipage}[t]{0.29\columnwidth}
        \centering
        \includegraphics[width=\linewidth]{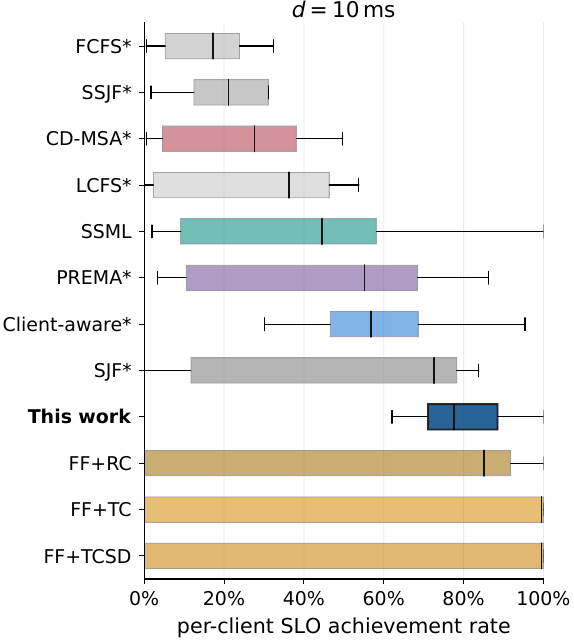}
        
        {\scriptsize (b) Distribution of \(H^S_k\).}
    \end{minipage}
    \hfill
    \begin{minipage}[t]{0.29\columnwidth}
        \centering
        \includegraphics[width=\linewidth]{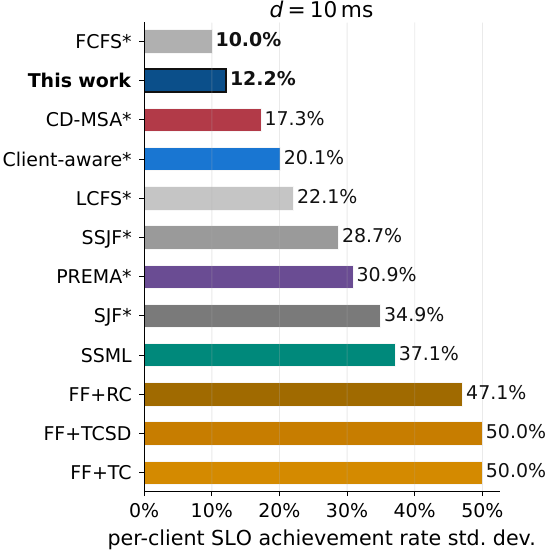}
        
        {\scriptsize (c) Std. dev. of \(H^S_k\).}
    \end{minipage}

    \caption{Client-level behavior at 10 ms inter-arrival. The proposed
    scheduler keeps high SRT hit ratio while controlling the dispersion of
    client-level SLO achievement rates. (*) denotes a baseline wrapped with the
    proposed HRT safety envelope; rejected prefixes fall back to
    \(S^{\mathrm{ref}}\).}
    \label{fig:tenant_results_10ms}
\end{figure}

\paragraph{Client-level distribution at high pressure.}
At 10 ms, our scheduler achieves \(H^S=81.41\%\) with a standard deviation of
client-level rates equal to 12.16\%. FCFS has a slightly lower standard
deviation, 10.03\%, but only \(H^S=15.27\%\), so it is uniformly poor. On the
opposite side, FF+TC and FF+TCSD reach \(H^S=50.74\%\), but their standard
deviation is 49.99\%, with \(\min_k H^S_k=0.00\%\). Their boxplots therefore
show a polarized behavior: some clients are served very well, while others
receive no SRT service. Our scheduler avoids this collapse: the median client
rate is 77.61\%, the interquartile range is 71.05--88.52\%, and the worst client
remains at 62.05\%. Overall, when the workload changes from 40 ms to 10 ms inter-arrival, the
baselines move from acceptable aggregate behavior to poor client-level behavior.
The proposed scheduler remains HRT-safe by construction and keeps high SRT
performance: \(H^S=81.41\%\), high throughput, \(\min_k H^S_k=62.05\%\),
and standard deviation of \((H^S_k)\) equal to $12.16\%$.

\paragraph{Reference-schedule fallback rate.}
Figure~\ref{fig:reference_fallback_rate} reports the
\emph{reference-schedule fallback rate}, i.e., the fraction of scheduler
activations for which the candidate schedule prefix is rejected by the runtime
acceptance test and execution falls back to the certified HRT reference
schedule \(S^{\mathrm{ref}}\).
This metric is important because it quantifies how often the online scheduler,
by itself, fails to produce a certifiable prefix under the HRT safety
constraints.

Algorithms marked with an asterisk are \emph{not} natively HRT-safe in our
model. To enforce HRT correctness and enable a fair comparison under preserved
HRT schedulability, we enclose them within the same deterministic
prefix-replacement safety envelope used by our framework. Therefore, the
asterisk denotes an \emph{HRT-safe wrapped variant} of the original scheduler.
A lower fallback rate indicates that the underlying online scheduler is more
compatible with HRT constraints, since it more frequently generates prefixes
that can be certified safe without reverting to \(S^{\mathrm{ref}}\).

Our scheduler exhibits a very low fallback rate because HRT-awareness is built
directly into the learning process. In particular, HRT jobs are explicitly
identified in the policy input, and HRT deadline violations are assigned a large
negative reward during training. As a result, the learned policy does not
systematically rely on the fallback mechanism; instead, it learns to produce
candidate prefixes that are already close to being certifiable under the HRT
safety test.
\paragraph{Ablation without fallback.}
We also evaluate the schedulers with the prefix-replacement safety envelope
disabled, \ie candidate prefixes are executed directly without fallback to
the certified reference schedule. This experiment is diagnostic only: HRT
correctness is no longer guaranteed. As shown in
Figure~\ref{fig:no_safety_envelope_ablation}, most baselines incur substantial
HRT deadline losses. Our policy reaches \(99.8\%\) HRT deadline satisfaction
and \(81.4\%\) SRT deadline satisfaction, because HRT/SRT job type is part of
the policy input and HRT deadline misses are strongly penalized during
training. However, \(99.8\%\) is still not an HRT guarantee. This confirms that
the deterministic fallback mechanism is necessary to turn good empirical HRT
behavior into preserved HRT schedulability.

\begin{figure}[t]
    \centering
    \includegraphics[width=0.90\columnwidth]{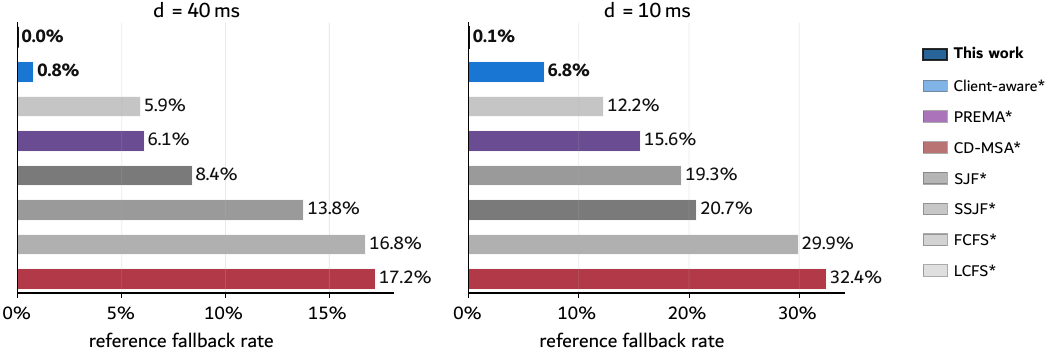}

\caption{Reference-schedule fallback rate under our HRT safety envelope.
The metric reports how often a candidate prefix is rejected and execution falls
back to the certified reference schedule \(S^{\mathrm{ref}}\). Lower is better.}
    \label{fig:reference_fallback_rate}
\end{figure}
\begin{figure}[!t]
    \centering
    \includegraphics[width=0.79\columnwidth]{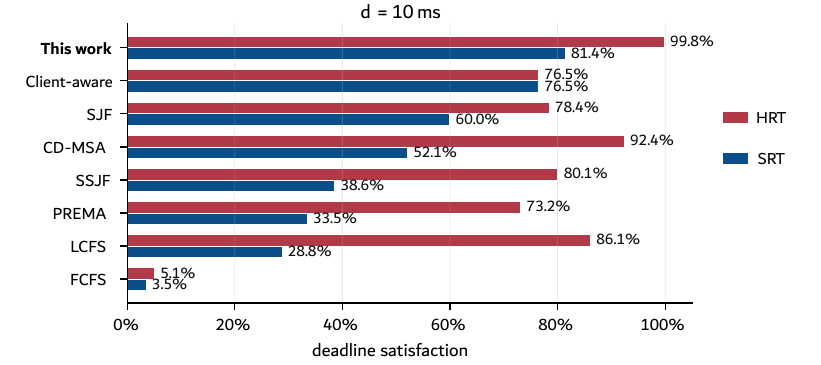}

    \caption{Ablation without the runtime HRT safety envelope at \(d=10\,\mathrm{ms}\).
    The plot reports HRT and SRT deadline satisfaction when candidate prefixes
    are executed directly, without fallback to \(S^{\mathrm{ref}}\).}
    \label{fig:no_safety_envelope_ablation}
\end{figure}

\paragraph{Scheduler overhead.}
We account for the end-to-end runtime overhead incurred at each scheduling
instant, including state encoding, policy inference, runtime certification, and
dispatching/cancellation bookkeeping. We denote this overhead by
$
\tau_{\mathrm{sch}}
=
\tau_{\mathrm{enc}}
+
\tau_{\mathrm{policy}}
+
\tau_{\mathrm{cert}}
+
\tau_{\mathrm{disp}} .
$
This term is treated as the worst-case execution time of the online scheduler
and is charged at every scheduling activation in the experimental timing
model.
The scheduler is activated every \(T_S=5\,\mathrm{ms}\). To make
\(\tau_{\mathrm{sch}}\) bounded, the number of sub-jobs visible to the online
policy is limited to a fixed maximum \(N_{\max}\): all ready HRT sub-jobs are
always included, while the SRT ready queue is truncated when necessary; more priority is given to the SRT jobs pertaining to clients with a SLO achievement rate lower than the target. This
bounds the cost of both state encoding and policy inference independently of
the total SRT backlog. For the GRU policy with hidden size 192, the measured
cost is 592 cycles per encoded sub-job on the smallest accelerator. With
\(N_{\max}=64\), this gives
$
\tau_{\mathrm{policy}} = 64 \cdot 592 = 37888 \text{ cycles}
= 37.9\,\mu\mathrm{s}
$
at 1\,GHz, i.e., \(0.76\%\) of the scheduling period. The runtime
certification step is also bounded: it scans at most all the $N$ HRT sub-jobs, therefore it has $O(N)$ complexity and is
bounded by \(N_{\max}\). Moreover, since sub-jobs are distributed across scheduling periods, $N$ is only an upper bound. Increasing the number of HRT tasks or DAG length increases $N$ linearly.

\paragraph{Scheduling period analysis.}
We evaluate the impact of $T_S$ under high pressure ($d=10$\,ms)
through two experiments: (i) deploying a policy trained with
$T_S=5$\,ms at different scheduling periods, and (ii) training
and testing a separate policy for each period.

For $T_S\in\{0.5,1,1.25,2.5,5,10\}$\,ms, the policy trained
at $5$\,ms achieves SRT hit ratios $H^S$ of
$73.73\%$, $76.84\%$, $69.45\%$, $74.96\%$, $81.40\%$, and
$71.12\%$, respectively. Although performance peaks at the
training period, the policy retains effectiveness
across the tested periods, demonstrating transfer across
scheduling frequencies.
Conversely, policies trained specifically at $0.5$, $1$,
$1.25$, $2.5$, and $10$\,ms achieve only $25.69\%$, $27.92\%$, $46.82\%$, $48.79\%$, and $62.81\%$, respectively. We attribute the poorer results of training with short periods
to the following reasons. Shorter windows contain fewer completed
sub-jobs contributing to the reward and provide less execution
time over which the policy can exploit accelerator availability.
Moreover, scheduling overhead accounts for a larger fraction
of execution time. Conversely, large periods are constrained by
timing requirements of both HRT and SRT jobs. Their relative deadlines must
accommodate waiting time, scheduling overhead, and the
required execution time. Additional timing slack also leaves
more room for online optimization.

\section{Related Work}

\begin{table}[t]

\definecolor{mygreen}{HTML}{15ba38}
\definecolor{myred}{HTML}{dd5b5b}
\definecolor{myblue}{HTML}{0298cc}

\def\sotasi{\textcolor{mygreen}{\faCheck}}
\def\sotano{\textcolor{myred}{\faClose}}
\def\sotani{\textcolor{orange}{\faMinus}}
\def\sotabo{\textcolor{myblue}{\faQuestion}}
\def\sotana{\textcolor{gray}{$\bullet$}}

\centering
\caption{Comparison of related work.}
\label{tab:sota}

\setlength{\tabcolsep}{2.5pt}
\renewcommand{\arraystretch}{1.05}

\resizebox{\columnwidth}{!}{%
\begin{tabular}{@{}lccccccccc@{}}

\toprule

\textbf{Work}
& \shortstack{\textbf{Multi-}\\\textbf{proc.}}
& \shortstack{\textbf{Unrelated}\\\textbf{machines}}
& \shortstack{\textbf{Precedence}\\\textbf{constraints}}
& \shortstack{\textbf{DNN}\\\textbf{inference}}
& \shortstack{\textbf{RL}\\\textbf{scheduler}}
& \shortstack{\textbf{Online}\\\textbf{scheduler}}
& \shortstack{\textbf{Soft}\\\textbf{real-time}}
& \shortstack{\textbf{Client}\\\textbf{aware}}
& \shortstack{\textbf{Hard}\\\textbf{real-time}}
\\

\midrule

PREMA~\cite{choi2020prema}
& \sotano & \sotana & \sotani & \sotasi & \sotano
& \sotasi & \sotasi & \sotasi & \sotano \\

MAGMA~\cite{kao2022magma}
& \sotasi & \sotasi & \sotani & \sotasi & \sotano
& \sotano & \sotano & \sotano & \sotano \\

Tang et al.~\cite{tang2011combining}
& \sotasi & \sotasi & \sotano & \sotano & \sotano
& \sotasi & \sotasi & \sotano & \sotasi \\

SSML~\cite{slackstealing}
& \sotano & \sotano & \sotano & \sotano & \sotano
& \sotasi & \sotasi & \sotano & \sotasi \\

Willemsen et al.~\cite{willemsen2025transfer}
& \sotano & \sotana & \sotano & \sotano & \sotano
& \sotasi & \sotano & \sotana & \sotasi \\

CD-MSA~\cite{wang2023cdmsa}
& \sotani & \sotano & \sotani & \sotasi & \sotano
& \sotasi & \sotasi & \sotani & \sotano \\

RELMAS~\cite{RELMAS1}
& \sotasi & \sotasi & \sotasi & \sotasi & \sotasi
& \sotasi & \sotasi & \sotano & \sotano \\

Client-aware~\cite{RELMAS2}
& \sotasi & \sotasi & \sotasi & \sotasi & \sotasi
& \sotasi & \sotasi & \sotasi & \sotano \\

\midrule

\textbf{Ours}
& \sotasi & \sotasi & \sotasi & \sotasi & \sotasi
& \sotasi & \sotasi & \sotasi & \sotasi \\

\bottomrule
\addlinespace[0.5mm]

\multicolumn{10}{r}{%
\footnotesize
\sotasi\, Yes
\quad
\sotano\, No
\quad
\sotani\, Limited
\quad
\sotana\, Not Applicable
} \\

\end{tabular}%
}

\end{table}

Table~\ref{tab:sota} compares representative related work along key dimensions. 

\paragraph{Hard-periodic/soft-aperiodic scheduling on heterogeneous resources.}
Tang et al.~\cite{tang2011combining} study the joint scheduling of hard
periodic and soft aperiodic tasks on heterogeneous resources. Their approach is divided into two phases: an offline phase, where hard periodic tasks are scheduled using either Heterogeneous EDF (H-EDF) or
Fastest First (FF); then, online, soft aperiodic tasks are scheduled using the per-resource slack windows extracted from the resulting
hard schedule, using Resource-Centric
(RC), Task-Centric (TC), or Task-Centric with Slack Defragmentation (TCSD).
TCSD can shift hard periodic instances earlier than their latest start
times to enlarge later slack windows. The original model assumes independent,
non-preemptive tasks that can execute on any resource with resource-dependent
execution times.

\paragraph{DNN inference on domain-specific accelerators.}
Extensive research has investigated scheduling strategies for DNN inference on accelerator-based platforms. Static solutions such as LayerWeaver~\cite{oh2021layerweaver} and AI-MT~\cite{baek2020multi} assign DNN layers to accelerators based on analytical models, but they cannot adapt to runtime variability. Schedulers like PREMA~\cite{choi2020prema} and MAGMA~\cite{kao2022magma} adjust accelerator allocation in response to workload fluctuations. Techniques such as Planaria~\cite{ghodrati2020planaria} and MoCA~\cite{kim2023moca} redistribute compute resources at finer granularity to improve utilization and co-location efficiency. Orion~\cite{strati2024orion} considers multiple priority queues in GPU scheduling. The considered systems optimize different system-level metrics, without concentrating on single client QoS, and without providing deterministic HRT guarantees. CD-MSA~\cite{wang2023cdmsa} targets EDF-based real-time DNN inference on multi-systolic-array accelerators, which are homogeneous accelerator platforms.
In parallel, several works have focused on enhancing the time predictability of DNNs on general-purpose processors, including efforts to improve timing isolation and scheduling analysis~\cite{casini2020timing}.

\paragraph{Slack stealing and modified look-ahead EDF.}
Slack stealing is a classical mechanism for improving the responsiveness of
soft aperiodic work in systems where hard periodic tasks have already been
guaranteed. A slack stealer exploits unused processing time from the periodic schedule to execute the aperiodic work, while ensuring that previously guaranteed
periodic deadlines are not violated~\cite{lehoczky1992optimal}. This idea is
closely related to the look-ahead EDF (laEDF) mechanisms used in real-time dynamic
voltage scaling: Pillai and Shin~\cite{pillai2001rtDVS} compute the minimum
amount of periodic work that must execute before the next deadline, deferring
the remaining work to later intervals when this is safe. SSML~\cite{slackstealing}
adapts laEDF principle to hard-periodic and soft-aperiodic
scheduling for uniprocessor systems with independent tasks.

\section{Conclusion}

This paper presented a scheduling framework for heterogeneous multi-accelerator systems executing both hard real-time and soft real-time QoS-aware DNN inference workloads. The proposed approach combines an online RL-based scheduler with a deterministic runtime certification mechanism that guarantees HRT correctness independently of the learned policy. To enable safe runtime adaptation, we introduced a schedulability-preserving prefix replacement criterion for heterogeneous precedence-constrained schedules. Experimental results showed that the framework preserves HRT schedulability while improving QoS-aware deadline satisfaction, client-level service quality, and accelerator utilization compared to state-of-the-art scheduling approaches. Future work includes extending the framework to support additional runtime uncertainties, workload prediction mechanisms, and alternative QoS optimization objectives.

\subsubsection*{Acknowledgements.}
This work was supported by the European Commission through the Chips Joint Undertaking under Grant Agreement No.~101139790 (ECS4DRES).

\bibliographystyle{splncs04}
\bibliography{bibliography}


\end{document}